\definecolor{mypurple}{rgb}{.4,.0,.5}
\definecolor{darkgreen}{rgb}{0, 0.4,0}
\definecolor{purplebrown}{rgb}{0.5,0.1,0.6}
\newcommand{\bl}[1]{\textcolor{blue}{#1}}
\definecolor{shadebrown}{rgb}{0.1,0.1,0.9}
\definecolor{lightblue}{rgb}{0.2,0,1}
\newtcbox{\xmybox}{on line,
arc=7pt,
before upper={\rule[-3pt]{0pt}{10pt}},boxrule=0pt,
boxsep=0pt,left=6pt,right=6pt,top=0pt,bottom=0pt,enhanced, coltext=blue, colback=white!10!yellow}
\newtcbox{\xmyboxa}{on line,
arc=7pt,
before upper={\rule[-3pt]{0pt}{10pt}},boxrule=0pt,
boxsep=0pt,left=6pt,right=6pt,top=0pt,bottom=0pt,enhanced, colback=white!10!yellow}
\newtcbox{\xmyboxb}{on line,
arc=7pt,
before upper={\rule[-3pt]{0pt}{10pt}},boxrule=1pt,colframe=darkgreen!100!blue,
boxsep=0pt,left=6pt,right=6pt,top=0pt,bottom=0pt,enhanced, colback=white!10!yellow}
\newtcbox{\xmyboxc}{on line,
arc=7pt,
before upper={\rule[-3pt]{0pt}{10pt}},boxrule=.7pt,colframe=blue!100!blue,
boxsep=0pt,left=6pt,right=6pt,top=0pt,bottom=0pt,enhanced, coltext=blue, colback=white!10!yellow}
\newtcbox{\xmytboxa}{on line,
arc=7pt,
before upper={\rule[-3pt]{0pt}{10pt}},boxrule=.0pt,colframe=pink!50!yellow,
boxsep=0pt,left=6pt,right=6pt,top=0pt,bottom=0pt,enhanced, coltext=white, colback=blue!40!red}
\newtcbox{\xmytboxb}{on line,
arc=7pt,
before upper={\rule[-3pt]{0pt}{10pt}},boxrule=.0pt,colframe=pink!50!yellow,
boxsep=0pt,left=6pt,right=6pt,top=0pt,bottom=0pt,enhanced, coltext=white, colback=white!40!green}
\def\y{{\bf y}}
\def\v{{\bf v}}
\def\x{{\bf x}}
\def\x{{\mathbf x}}
\def\v{{\bf v}}
\def\x{{\bf x}}
\def\y{{\bf y}}
\def\be{\begin{equation}}
\def\ee{\end{equation}}
\def\ba{\left[\begin{array}}
\def\ea{\end{array}\right]}
\def\v{{\bf v}}
\def\x{{\bf x}}
\def\y{{\bf y}}
\def\1{{\bf 1}}
\def\0{{\bf 0}}
\def\erf{\mbox{erf}}
\def\erfc{\mbox{erfc}}
\def\mR{{\mathbb R}}
\newtheorem{theorem}{Theorem}
\begin{document}


\begin{singlespace}

\title {Algorithmic random duality theory -- large scale CLuP 
}
\author{
\textsc{Mihailo Stojnic
\footnote{e-mail: {\tt flatoyer@gmail.com}} }}
\date{}
\maketitle

\centerline{{\bf Abstract}} \vspace*{0.1in}

Based on our \bl{\textbf{Random Duality Theory (RDT)}}, in a sequence of our recent papers \cite{Stojnicclupint19,Stojnicclupcmpl19,Stojnicclupplt19}, we introduced a powerful algorithmic mechanism (called \bl{\textbf{CLuP}}) that can be utilized to solve \textbf{\emph{exactly}} NP hard optimization problems in polynomial time. Here we move things further and utilize another of remarkable RDT features that we established in a long line of work in \cite{StojnicCSetam09,StojnicCSetamBlock09,StojnicISIT2010binary,StojnicDiscPercp13,StojnicUpper10,StojnicGenLasso10,StojnicGenSocp10,StojnicPrDepSocp10,StojnicRegRndDlt10,Stojnicbinary16fin,Stojnicbinary16asym}. Namely, besides being stunningly precise in characterizing the performance of various random structures and optimization problems, RDT simultaneously also provided an almost unparallel way for creating computationally efficient optimization algorithms that achieve such performance. One of the keys to our success was our ability to transform the initial \textbf{\emph{constrained}} optimization into an \textbf{\emph{unconstrained}} one and in doing so greatly simplify things both conceptually and computationally. That ultimately enabled us to solve a large set of classical optimization problems on a very large scale level. Here, we demonstrate how such a thinking can be applied to CLuP as well and eventually utilized to solve pretty much any problem that the basic CLuP from \cite{Stojnicclupint19,Stojnicclupcmpl19,Stojnicclupplt19} can solve. Since this is the first paper in this direction we focus on the utilization of the \textbf{\emph{large scale}} CLuP for solving the famous MIMO ML detection problem and show that it can easily handle problems of dimensions of \emph{\textbf{several thousands}} with theoretically minimal -- \textbf{\emph{quadratic}} per iteration -- complexity that includes only a \bl{\emph{\textbf{single matrix-vector multiplication}}}. A solid set of numerical experiments demonstrates a rather remarkable agreement between the simulated results and the theoretical predictions.

As mentioned on various occasions in \cite{Stojnicclupint19,Stojnicclupcmpl19,Stojnicclupplt19}, CLuP mechanisms are very generic and can be used to attack a tone of problems in various other scientific fields. In some of our companion papers we will discuss how similar ideas can be adapted to all of these scenarios as well.

\vspace*{0.25in} \noindent {\bf Index Terms: Large scale CLuP; ML - detection; MIMO systems; Algorithms; Random duality theory}.

\end{singlespace}

\section{Introduction}
\label{sec:back}

In \cite{Stojnicclupint19,Stojnicclupcmpl19,Stojnicclupplt19}, we recently introduced a powerful concept called \bl{\textbf{CLuP}} as a mechanism of our \bl{\textbf{Random Duality Theory (RDT)}} to attack hard optimization problems. A large set of numerical and theoretical results presented in \cite{Stojnicclupint19,Stojnicclupcmpl19,Stojnicclupplt19} demonstrated a rather fascinating phenomenon. Namely, achieving the so-called ML performance at the output detection of the MIMO systems (the well known NP hard problem within the classical complexity theory) is actually possible in polynomial time. Since the concepts presented in \cite{Stojnicclupint19,Stojnicclupcmpl19,Stojnicclupplt19} are very general, we could have chosen a tone of other problems as introductory examples where we would demonstrate the power of CLuP and RDT. We chose the so called MIMO ML detection for out initial presentation due to its enormous popularity in a host of scientific fields ranging from signal processing and information theory to statistics and machine learning and many others. To parallel our introductory presentations from \cite{Stojnicclupint19,Stojnicclupcmpl19,Stojnicclupplt19} and to facilitate the presentation of the main ideas that we introduce later on here we will again focus on the very same MIMO ML problem. Along the same lines, we first briefly recall the definition of the problem. However, due to its popularity and the fact the we have already discussed this very same problem on quite a few occasions in \cite{Stojnicclupint19,Stojnicclupcmpl19,Stojnicclupplt19} (and earlier in e.g. \cite{StojnicBBSD05,StojnicBBSD08}) we will, as usual, try to avoid repeating many of the details presented in these papers and instead focus on the key differences.

One typically models the noise corrupted linear MIMO system in the following way:
\begin{eqnarray}\label{eq:linsys1}
\y=A\x_{sol}+\sigma\v.
\end{eqnarray}
It is rather obvious, but for the completeness, $\x_{sol}\in\mR^n$ is the vector at the input of the MIMO system, $A\in\mR^{m\times n}$ is the matrix that models the connections (to be precise, exactly $mn$ of them) between the multiple input and the multiple output of the system, and $\v\in\mR^m$ is the so-called noise vector that is of additive type and is added on top of what is obtained at the output of the system (the noise is typically scaled by a factor $\sigma$ that controls the so-called signal-to-noise (SNR) ratio). Of course, it goes almost without saying that the vector $\y\in\mR^m$ is the vector one finally obtains at the output of the system after the noise is added. The linear MIMO models defined above are of course well known and belong to the group of some of the most fundamental mathematical models in many scientific fields, information theory, signal processing, linear estimation, and statistics, just to name a few.

It is not that hard to guess that one of the key problems in the above linear MIMO models is the recovery of the input vector $\x_{sol}$ at the output of the system. Such a recovery is typically called MIMO detection. Depending on the scenario where the models are used, the recovery/detection of $\x_{sol}$ might be under the premise that the systems matrix $A$ is: 1) unknown (non-coherent detection) and 2) known (coherent detection). In this paper we will focus on the coherent type of detection assuming that the matrix $A$ is known at the system output. The structure of the system matrix $A$, the noise vector $\v$, and the input vector/signal $\x_{sol}$ also play important role in one's ability to ultimately recover $\x_{sol}$ at the output (or as it is often called, the receiving end). Among the key structural features are the dimensions and the type of the entries of these vectors. Here we will assume the so-called linear regime, i.e. that $m=\alpha n$ where $\alpha>0$ is a real number and that both, $n$ and $m$, are large (while we ill be interested in all scenarios for any $\alpha>0$, the experienced reader will immediately recognize that the most interesting and hardest to handle are those where $\alpha<1$; precisely those will be of our predominant interest as well). We will also consider the so-called typical Gaussian statistical scenario where the elements of matrix $A$ and vector $\v$ are i.i.d. standard normal (i.e. Gaussian) random variables (we, of course, add right here at the beginning that Gaussianity of $A$ is chosen for the simplicity of the presentation; however, it can be replaced with pretty much any distribution with very mild moment conditions and all our results will continue to be in place). Within the framework that we will consider below, the structure of $\x_{sol}$ can be pretty much anything. In this paper though, we will typically focus on the so-called binary scenario (typically seen in digital communications and multi-antenna MIMO systems). However in some of our companion papers we will discuss various other $\x_{sol}$ structures.

Under the above assumptions, one of the most famous statistical/information theoretic ways to recover/estimate $\x_{sol}$ is through the so-called ML criterion which amounts to solving the following optimization problem
\begin{eqnarray}\label{eq:ml1}
\hat{\x}=\min_{\x\in{\cal X}}\|\y-A\x\|_2.
\end{eqnarray}
It is rather clear that ${\cal X}$ stands for the set of possible $\x_{sol}$ at the input of the system. As mentioned above, to parallel the expositions from \cite{Stojnicclupint19,Stojnicclupcmpl19,Stojnicclupplt19}, we will here consider the standard binary scenario which assumes ${\cal X}=\{-\frac{1}{\sqrt{n}},\frac{1}{\sqrt{n}}\}^n$. However, as was the case for the mechanisms presented in \cite{Stojnicclupint19,Stojnicclupcmpl19,Stojnicclupplt19}, the mechanisms that we present below are very generic and hold for pretty much any type of set ${\cal X}$ (also as mentioned above, in some of our companion papers we will discuss what form the results that we present below for the binary ${\cal X}$ take when ${\cal X}$ has various different structures).

Now that we established the optimization in (\ref{eq:ml1}) as the key problem of our interest in this paper, one wonders how it can be solved. As stated on multiple occasions in \cite{Stojnicclupint19,Stojnicclupcmpl19,Stojnicclupplt19} the assumed binary structure of ${\cal X}$ makes the optimization problem in (\ref{eq:ml1}) among the hardest well known optimization problems (in particular so when $\alpha<1$). A lot of work over last several decades has been done in an attempt to solve this problem either \emph{approximately} through various heuristics or \emph{exactly} through precise optimization algorithms designed to achieve the global optimum in (\ref{eq:ml1}).

As our goal in this paper is a discussion related to a particular type of algorithms we will leave a more thorough discussion of all relevant prior work regarding the known techniques that can be used for solving (\ref{eq:ml1}) to survey papers. Here we will just briefly mention a few papers that are most closely related to what we discuss below. Namely, among the most popular continuous relaxation type of heuristics are those that are well known in the optimization theory communities and that we discussed as parts of strategies that we designed in \cite{StojnicBBSD08,StojnicBBSD05}. These techniques relate to the so-called Ball, Polytope, and SDP convex relaxations of the discrete set ${\cal X}$ (various other relaxations are possible as well). On the other hand, if one is interested in the exact solutions of (\ref{eq:ml1}), then the so-called Sphere-decoder (SD) algorithm from \cite{FinPhoSD85,HassVik05,JalOtt05} and the Branch-and-bound algorithms from \cite{StojnicBBSD08,StojnicBBSD05} are among the very best mathematically analyzable alternatives.

In the above mentioned \cite{Stojnicclupint19,Stojnicclupcmpl19,Stojnicclupplt19}, we introduced a completely new concept for attacking the optimization problem in (\ref{eq:ml1}) on the so-called \textbf{\emph{exact}} level. We called the resulting algorithm Controlled Loosening-up (\bl{\textbf{CLuP}}). It basically relies on the above mentioned very powerful \bl{\textbf{Random Duality Theory (RDT)}} concepts that we created in a long line of work \cite{StojnicCSetam09,StojnicCSetamBlock09,StojnicISIT2010binary,StojnicDiscPercp13,StojnicUpper10,StojnicGenLasso10,StojnicGenSocp10,StojnicPrDepSocp10,StojnicRegRndDlt10,Stojnicbinary16fin,Stojnicbinary16asym}, for solving and analyzing large classes of optimization problems (interestingly, among such problems are also the famous LASSO and SOCP alternatives of (\ref{eq:ml1}) which are among the most fundamental problems in  machine learning, compressed sensing, and statistics (see also, e.g. \cite{CheDon95,Tibsh96,DonMalMon10,BunTsyWeg07,vandeGeer08})). Before proceeding further we below recall on some of the CLuP's basic properties.

\subsection{Basic \bl{CLuP}}
\label{sec:clup}

One of the key features of the CLuP mechanism is its incredible simplicity. In its most basic form it amounts to the following iterative procedure. One basically starts with an initial vector, say $\x^{(0)}$ (which can be either deterministic or randomly generated from set ${\cal X}=\{-\frac{1}{\sqrt{n}},\frac{1}{\sqrt{n}}\}^n$) and proceeds iteratively:
\begin{eqnarray}
\x^{(i+1)}=\frac{\x^{(i+1,s)}}{\|\x^{(i+1,s)}\|_2} \quad \mbox{with}\quad \x^{(i+1,s)}=\mbox{arg}\min_{\x} & & -(\x^{(i)})^T\x  \nonumber \\
\mbox{subject to} & & \|\y-A\x\|_2\leq r\nonumber \\
&& \x\in \left [-\frac{1}{\sqrt{n}},\frac{1}{\sqrt{n}}\right ]^n. \label{eq:clup1}
\end{eqnarray}
As was extensively discussed in \cite{Stojnicclupint19,Stojnicclupcmpl19,Stojnicclupplt19}, one of the CLuP's most important components is the so-called radius $r$. As was further suggested in \cite{Stojnicclupint19,Stojnicclupcmpl19,Stojnicclupplt19}, $r$ was viewed as a multiple of $r_{plt}$, i.e. $r=r_{sc}r_{plt}\sqrt{n}$ where $r_{plt}$ is the $\sqrt{n}$ scaled radius that corresponds to the so-called polytope relaxation of the original problem in (\ref{eq:ml1}) (more on the polytope and other relaxations can be found in e.g. \cite{StojnicBBSD05,StojnicBBSD08} where we utilized it as the starting step in a branch-and-bound mechanism that we designed for finding the exact solution of (\ref{eq:ml1})). Although the whole CLuP concept seems incredibly simple a detailed theoretical analysis and a large set of numerical experiments in \cite{Stojnicclupint19,Stojnicclupcmpl19,Stojnicclupplt19} demonstrated that it performs remarkably well and easily approaches the so-called exact ML performance already on relatively small dimensions of order of several hundreds.

In Figure \ref{fig:fighighlightclup} we recall on the theoretical performance of the CLuP algorithm. As was mentioned earlier on multiple occasions, we will be interested in the so-called computationally hardest regimes, i.e. in the regimes where $\alpha<1$. In particular, in Figure \ref{fig:fighighlightclup} we set $\alpha=0.8$ and show the CLuP performance and how it compares to the corresponding ones of the well-known convex polytope relaxation heuristic (in \cite{Stojnicclupint19}, in addition to the Polytope-relaxation, we chose the Ball-relaxtion and the SDP-relaxation as well and the corresponding results related to those heuristics can be found there; as mentioned earlier, these heuristics are among the most popular and well known within the optimization theory and as such seemed as the most natural choice (see, e.g. \cite{GolVanLoan96Book,GroLovSch93Book,vanMaarWar00,GoeWill95} as well)).
\begin{figure}[htb]
\centering
\centerline{\epsfig{figure=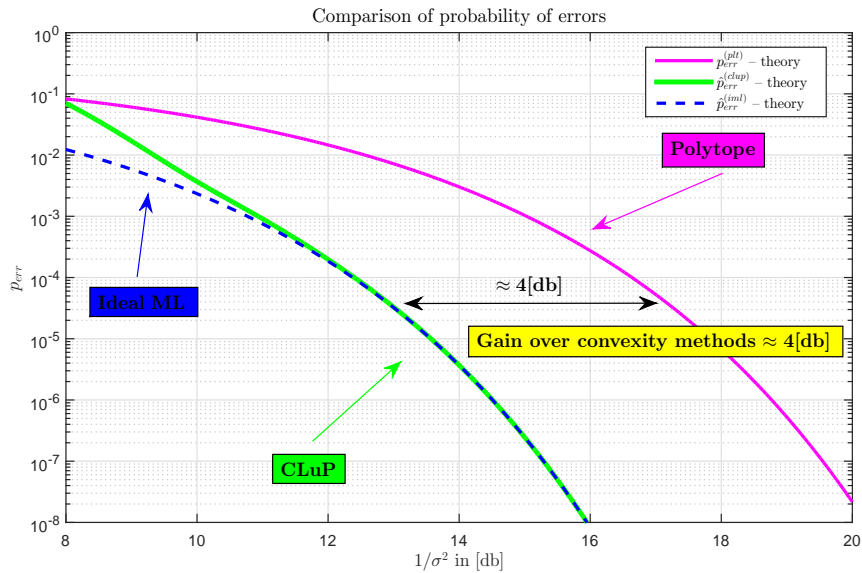,width=13.5cm,height=8cm}}
\caption{Comparison of $p_{err}$ as a function of $1/\sigma^2$; $\alpha=0.8$}
\label{fig:fighighlightclup}
\end{figure}
As Figure \ref{fig:fighighlightclup} (and the corresponding Figure 1 in \cite{Stojnicclupint19}) clearly indicate, CLuP is expected to substantially outperform all of the above mentioned convexity relaxation types of strategies. A large collection of numerical experiments that we conducted in \cite{Stojnicclupint19,Stojnicclupcmpl19,Stojnicclupplt19} confirmed that CLuP indeed approaches the ML performance while performing much better than the alternatives.

Looking at the structure of the CLuP one can immediately note that it is a very simple procedure. However, its theoretical analysis on the most demanding \emph{\textbf{per iteration level}} turned out to be a very serious challenge. Nonetheless, in \cite{Stojnicclupcmpl19} and \cite{Stojnicclupplt19} we were able to handle it and to dissect the overall CLuP performance through iterations into the tiniest of the details. Particularly remarkable was the fact that the theoretical predictions can be practically achieved often on the fifth decimal level already for a fairly small dimensions ($n$ of order of several hundreds was often sufficient). Besides the conclusions related to the performance in terms of probability of residual error (see Figure \ref{fig:fighighlightclup}), we also demonstrated that a rather small number of iterations was needed to achieve such a performance. In the most interesting regimes it was typically between $5$ and $15$ core iterations, where each core iteration essentially consisted of solving a quadratic program (for any fixed $i$, basically the one in (\ref{eq:clup1})). That of course implied that the overall complexity is basically the complexity of solving pretty much the simplest of quadratic programs. While quadratic programs are of course convex optimization problems and as such solvable through a host of well known techniques, their solving as the dimensions grow generically tends to require a cubic complexity. Running large scale examples of several tens/hundreds of thousands or millions (which are predicated to dominate in the future applications in the big data era) while doable in principle might then practically be a bit slow. Below we address this large dimensions phenomenon and create a CLuP program that is particularly tailored for large scale applications.

We will split the presentation into several parts. First we will introduce several Random Duality fundamental concepts that will be needed in the algorithms design. Then we will explain how such concepts can be utilized to eventually design practically useful algorithms. We will also present a solid set of numerical results, and observe a rather remarkable concurrence between the theoretical predictions and the practical realizations. Towards the end we will also provide a few concluding remarks.

\section{Algorithmic \bl{Random Duality Theory}}
\label{sec:algclup}

As mentioned quite a few times so far, to analyze the CLuP mechanism we in \cite{Stojnicclupint19,Stojnicclupcmpl19,Stojnicclupplt19} relied on the \bl{\textbf{Random Duality Theory}} and a long line of results that we created in a sequence of papers \cite{StojnicCSetam09,StojnicCSetamBlock09,StojnicISIT2010binary,StojnicDiscPercp13,StojnicUpper10,StojnicGenLasso10,StojnicGenSocp10,StojnicPrDepSocp10,StojnicRegRndDlt10,Stojnicbinary16fin,Stojnicbinary16asym}. Several different paths for analysis of various aspects of CLuP were provided in each of \cite{Stojnicclupint19,Stojnicclupcmpl19,Stojnicclupplt19}. Since we were interested in many specific features of CLuP (precision/probability of error, computational complexity, overall per iteration significant parameters tracking and so on) each analysis path was in a way tailored to emphasize particularities of such features. Below we follow one of such paths as it will be particularly convenient for showcasing the large scale CLuP capabilities.

We start things off by first reemphasizing a simple observation from  \cite{Stojnicclupint19,Stojnicclupcmpl19,Stojnicclupplt19} that CLuP is a trivially converging procedure. To that end we look at the ending optimization problem of the converging process
\begin{eqnarray}
\min_{\x} & & -\|\x\|_2  \nonumber \\
\mbox{subject to} & & \|\y-A\x\|_2\leq r\nonumber \\
&& \x\in \left [-\frac{1}{\sqrt{n}},\frac{1}{\sqrt{n}}\right ]^n. \label{eq:clup2}
\end{eqnarray}
Utilizing the expression for $\y$ from (\ref{eq:linsys1}) one can quickly transform (\ref{eq:clup2}) into
\begin{eqnarray}
\min_{\x} & & -\|\x\|_2  \nonumber \\
\mbox{subject to} & & \|[A \v]\begin{bmatrix}\x_{sol}-\x\\\sigma\end{bmatrix}\|_2\leq r\nonumber \\
&& \x\in \left [-\frac{1}{\sqrt{n}},\frac{1}{\sqrt{n}}\right ]^n. \label{eq:clup4}
\end{eqnarray}
As was the case in \cite{Stojnicclupint19,Stojnicclupcmpl19,Stojnicclupplt19}, we proceed in the standard \bl{\textbf{Random Duality Theory}} fashion and find it useful to work with two specific concentrating parameters, $c_1$ and $c_2$ defined in the following way
\begin{eqnarray}
c_2 & = & \|\x\|_2^2\nonumber \\
c_1 & = & (\x_{sol})^T\x. \label{eq:clup3}
\end{eqnarray}
One can then rewrite (\ref{eq:clup4}) as
\begin{eqnarray}
\min_{c_2\in[0,1]}\min_{\|\x\|_2^2=c_2} & & -\sqrt{c_2}  \nonumber \\
\mbox{subject to} & & \|[A \v]\begin{bmatrix}\x_{sol}-\x\\\sigma\end{bmatrix}\|_2\leq r\nonumber \\
&& \x\in \left [-\frac{1}{\sqrt{n}},\frac{1}{\sqrt{n}}\right ]^n. \label{eq:clup4ga}
\end{eqnarray}
Following further the basics of \bl{\textbf{RDT}} outlined in \cite{Stojnicclupint19,Stojnicclupcmpl19,Stojnicclupplt19} and after forming the standard Lagrangian we also have
\begin{eqnarray}
\min_{c_2\in[0,1]}\min_{\|\x\|_2^2=c_2} \max_{\gamma_1} & & -\sqrt{c_2} +\gamma_1\left (\mbox{max}_{\|\lambda\|_2=1}\lambda^T\left ([A \v]\begin{bmatrix}\x_{sol}-\x\\\sigma\end{bmatrix}\right )- r\right ) \nonumber \\
&& \x\in \left [-\frac{1}{\sqrt{n}},\frac{1}{\sqrt{n}}\right ]^n. \label{eq:clup5}
\end{eqnarray}
Continuing to follow further \cite{Stojnicclupint19} and relying on the $\gamma_1$'s concentration we arrive at the following variant of \cite{Stojnicclupint19}'s equations (8) and (44)
\begin{eqnarray}
\min_{c_2\in[0,1]}\max_{\gamma_1}\min_{\|\x\|_2^2=c_2} \max_{\|\lambda\|_2=1}  & & -\sqrt{c_2} +\gamma_1\lambda^T\left ([A \v]\begin{bmatrix}\x_{sol}-\x\\\sigma\end{bmatrix}\right )- \gamma_1r \nonumber \\
&& \x\in \left [-\frac{1}{\sqrt{n}},\frac{1}{\sqrt{n}}\right ]^n. \label{eq:clup5a}
\end{eqnarray}
One can then form the \bl{\textbf{Random dual}} and trivially redo all the steps from \cite{Stojnicclupint19}. The difference this time will be that we will not assume that signs of all elements of $\x_{sol}$ are equal. Instead we will assume that $\x_{sol}$ has $\rho n$ ($\rho\in[0,1]$) components equal to $\frac{1}{\sqrt{n}}$ and $(1-\rho) n$ components equal to $-\frac{1}{\sqrt{n}}$. Still all the machinery of \cite{Stojnicclupint19} can be reutilized to obtain the following as the optimizing objective of the \bl{\textbf{Random dual}}
\begin{equation}
\xi_{RD,\gamma_1}(\alpha,\sigma;c_2,c_1,\gamma,\nu)=-\sqrt{c_2}+\gamma_1(\sqrt{\alpha}\sqrt{1-2c_1+c_2+\sigma^2}+I_{22}-I_{1}+I_{21}-\nu c_1-\gamma c_2)-\gamma_1 r, \label{eq:discclup17}
\end{equation}
where $\gamma_1$ is $\sqrt{n}$ scaled version of $\gamma_1$ from (\ref{eq:clup5a}) and
$I_{22}$, $I_{1}$, and $I_{21}$ can be obtained as
\begin{eqnarray}
I_{22} &  = & \rho I_{22}(\gamma,\nu)+(1-\rho)I_{22}(\gamma,-\nu) \nonumber \\
I_{1}  &  = & \rho I_{1}(\gamma,\nu)+(1-\rho)I_{1}(\gamma,-\nu)  \nonumber \\
I_{21} &  = & \rho I_{21}(\gamma,\nu)+(1-\rho)I_{21}(\gamma,-\nu),
\label{eq:clupg16}
\end{eqnarray}
and $I_{22}(\gamma,\nu)$, $I_{1}(\gamma,\nu)$, and $I_{21}(\gamma,\nu)$ are similar to \cite{Stojnicclupint19}'s equation (22), i.e.
\begin{eqnarray}
I_{22}(\gamma,\nu) &  = & 0.5(\nu + \gamma)\erfc((\nu + 2\gamma)/\sqrt{2}) - exp(-0.5(\nu + 2\gamma)^2)/\sqrt{2\pi} \nonumber \\
I_{1}(\gamma,\nu)  &  = & (\sqrt{\pi/2}(\nu^2 + 1)\erf((2\gamma - \nu)/\sqrt{2}) + \sqrt{\pi/2}(\nu^2 + 1)\erf((2\gamma + \nu)/\sqrt{2})\nonumber \\& & + exp(-0.5(\nu + 2\gamma)^2) (\nu - 2\gamma)\nonumber  - exp(-0.5(\nu-2\gamma)^2)(\nu + 2\gamma))/(4\sqrt{2\pi}\gamma)  \nonumber \\
I_{21}(\gamma,\nu) &  = &  -0.5(\nu - \gamma)(\erf((\nu - 2\gamma)/\sqrt{2}) + 1) - exp(-0.5(\nu - 2\gamma)^2)/\sqrt{2\pi}.
\label{eq:clup16}
\end{eqnarray}
Utilizing \cite{Stojnicclupint19}'s Theorem 1, we then have as the object of interest the following optimization
\begin{equation}
\min_{c_2\in[0,1]}\min_{c_1\in[0,\sqrt{c_2}]}\max_{\gamma_1,\gamma,\nu}  \xi_{RD,\gamma_1}(\alpha,\sigma;c_2,c_1,\gamma,\nu). \label{eq:clupg5a}
\end{equation}
To handle the above optimization we look at the stationary points. We start with the derivatives with respect to $c_1$ and $c_2$. First we have
\begin{equation}\label{eq:derc1cclupg1}
  \frac{d\xi_{RD,\gamma_1}(\alpha,\sigma;c_2,c_1,\gamma,\nu)}{d c_1}=
  -\frac{\sqrt{\alpha}}{\sqrt{1-2c_1+c_2+\sigma^2}}-\nu=0,
\end{equation}
which gives
\begin{equation}\label{eq:derc1cclupg2}
 \nu = -\frac{\sqrt{\alpha}}{\sqrt{1-2c_1+c_2+\sigma^2}}.
\end{equation}
Then we also find
\begin{equation}\label{eq:derc2cclupg1}
  \frac{d\xi_{RD,\gamma_1}(\alpha,\sigma;c_2,c_1,\gamma,\nu)}{d c_2}=
  -\frac{1}{2\sqrt{c_2}}+\frac{\gamma_1\sqrt{\alpha}}{2\sqrt{1-2c_1+c_2+\sigma^2}}-\gamma_1\gamma=-\frac{1}{2\sqrt{c_2}}-\frac{\gamma_1\nu}{2}-\gamma_1\gamma=0,
\end{equation}
which gives
\begin{equation}\label{eq:derc2cclupg2}
  \gamma_1=\frac{1}{2\sqrt{c_2}(-\nu/2-\gamma)}.
\end{equation}
The derivative with respect to $\gamma_1$ gives
\begin{equation}\label{eq:dergamma1cclupg1}
  \frac{d\xi_{RD,\gamma_1}(\alpha,\sigma;c_2,c_1,\gamma,\nu)}{d\gamma_1}=\sqrt{\alpha}\sqrt{1-2c_1+c_2+\sigma^2}+I_{22}-I_{1}+I_{21}-\nu c_1-\gamma c_2-r.
\end{equation}
For the derivative with respect to $\nu$ we have
\begin{equation}\label{eq:dernucclupg1}
  \frac{d\xi_{RD,\gamma_1}(\alpha,\sigma;c_2,c_1,\gamma,\nu)}{d\nu}=\rho\frac{dI(\gamma,\nu)}{d\nu}+(1-\rho)\frac{dI(\gamma,-\nu)}{d\nu}-c_1,
\end{equation}
where
\begin{equation}\label{eq:dernucclupg2}
I(\gamma,\nu)=I_{22}(\gamma,\nu)-I_{1}(\gamma,\nu)+I_{21}(\gamma,\nu),
\end{equation}
and consequently
\begin{equation}\label{eq:dernucclupg3}
\frac{dI(\gamma,\nu)}{d\nu}=\frac{dI_{22}(\gamma,\nu)}{d\nu}-\frac{dI_{1}(\gamma,\nu)}{d\nu}+\frac{dI_{21}(\gamma,\nu)}{d\nu}.
\end{equation}
Utilizing (\ref{eq:clup16}) we obtain
\begin{eqnarray}\label{eq:dernucclupg4}
\frac{dI_{22}(\gamma,\nu)}{d\nu}&=&\frac{\gamma}{\sqrt{2\pi}}\exp(-0.5(2\gamma + \nu)^2) + 0.5\erfc((2\gamma + \nu)/\sqrt{2})\nonumber \\
\frac{dI_{1}(\gamma,\nu)}{d\nu}&=&(-\sqrt{2\pi}\nu\erf((\nu - 2\gamma)/\sqrt{2}) + \sqrt{2\pi}\nu\erf((2\gamma +\nu)/\sqrt{2})\nonumber \\
& &- (\nu^2 + 1)\exp(-1/2(\nu - 2 \gamma)^2) + (\nu^2 + 1)\exp(-1/2 (2\gamma + \nu)^2)\nonumber \\
& &+ \exp(-1/2 (\nu - 2 \gamma)^2) (\nu - 2\gamma)(2\gamma +\nu) - \exp(-1/2(2\gamma + \nu)^2)(\nu - 2 \gamma)(2\gamma + \nu)\nonumber \\
 & &- \exp(-1/2(2\gamma - \nu)^2) + \exp(-1/2 (2 \gamma + \nu)^2))/(4\sqrt{2\pi}\gamma)\nonumber\\
\frac{dI_{21}(\gamma,\nu)}{d\nu}&=&-\frac{\gamma}{\sqrt{2\pi}}\exp(-0.5(-2\gamma + \nu)^2) - 0.5\erfc((2\gamma - \nu)/\sqrt{2}).
\end{eqnarray}
One can the recompute all of the above derivatives for $-\nu$ or utilize the following
\begin{equation}\label{eq:dernucclupg5}
\frac{dI(\gamma,-\nu)}{d\nu}=\frac{dI(\gamma,-\nu)}{d(-\nu)}\frac{d(-\nu)}{d\nu}=-\frac{dI(\gamma,\nu)}{d\nu}|_{\nu=-\nu}.
\end{equation}
For the derivative with respect to $\gamma$ we have
\begin{equation}\label{eq:dergammacclupg1}
  \frac{d\xi_{RD,\gamma_1}(\alpha,\sigma;c_2,c_1,\gamma,\nu)}{d\gamma}=\rho\frac{dI(\gamma,\nu)}{d\gamma}+(1-\rho)\frac{dI(\gamma,-\nu)}{d\gamma}-c_2,
\end{equation}
where similarly to (\ref{eq:dernucclupg3})
\begin{equation}\label{eq:dergammacclupg3}
\frac{dI(\gamma,\nu)}{d\gamma}=\frac{dI_{22}(\gamma,\nu)}{d\gamma}-\frac{dI_{1}(\gamma,\nu)}{d\gamma}+\frac{dI_{21}(\gamma,\nu)}{d\gamma}.
\end{equation}
Utilizing again (\ref{eq:clup16}) we obtain
\begin{eqnarray}\label{eq:dergammacclupg4}
\frac{dI_{22}(\gamma,\nu)}{d\gamma}&=&\gamma\sqrt{\frac{2}{\pi}}\exp(-0.5(2\gamma + \nu)^2) + 0.5\erfc((2\gamma + \nu)/\sqrt{2})\nonumber \\
\frac{dI_{1}(\gamma,\nu)}{d\gamma}&=&(2(\nu^2 + 1)\exp(-1/2(\nu - 2\gamma)^2) + 2(\nu^2 + 1)\exp(-1/2 (2\gamma + \nu)^2) \nonumber \\
& & - 2\exp(-1/2(2\gamma - \nu)^2) (\nu - 2\gamma) (2\gamma + \nu) - 2\exp(-1/2 (2\gamma + \nu)^2) (\nu - 2\gamma)(2\gamma + \nu) \nonumber \\
& & - 2\exp(-1/2(2\gamma - \nu)^2) - 2\exp(-1/2(2\gamma + \nu)^2))/(4\sqrt{2\pi}\gamma)\nonumber \\
& &- (-\sqrt{\pi/2}(\nu^2 + 1)\erf((\nu - 2\gamma)/\sqrt{2}) + \sqrt{\pi/2}(\nu^2 + 1)\erf((2\gamma +\nu)/\sqrt{2})\nonumber \\
& & + \exp(-1/2 (2\gamma + \nu)^2)(\nu - 2\gamma) - \exp(-1/2(2\gamma - \nu)^2) (2\gamma + \nu))/(4\sqrt{2\pi}\gamma^2)\nonumber\\
\frac{dI_{21}(\gamma,\nu)}{d\gamma}&=&\gamma\sqrt{\frac{2}{\pi}}\exp(-0.5(-2\gamma + \nu)^2) + 0.5\erfc((2\gamma - \nu)/\sqrt{2}).
\end{eqnarray}
Finally, one also has
\begin{equation}\label{eq:dergammacclupg5}
\frac{dI(\gamma,-\nu)}{d\gamma}=\frac{dI(\gamma,\nu)}{d\gamma}|_{\nu=-\nu}.
\end{equation}
We summarize the above considerations in the following theorem.
\begin{theorem}(\bl{\textbf{Random dual}} -- stationary points)
Let $\xi_{RD,\gamma_1}(\alpha,\sigma;c_2,c_1,\gamma,\nu)$ be as in (\ref{eq:discclup17}). Then its stationary points are all solutions to the following system of equations:
\begin{eqnarray}
  \frac{d\xi_{RD,\gamma_1}(\alpha,\sigma;c_2,c_1,\gamma,\nu)}{d\nu} & = & \rho\frac{dI(\gamma,\nu)}{d\nu}+(1-\rho)\frac{dI(\gamma,-\nu)}{d\nu}-c_1=0\nonumber \\
    \frac{d\xi_{RD,\gamma_1}(\alpha,\sigma;c_2,c_1,\gamma,\nu)}{d\gamma} & = &\rho\frac{dI(\gamma,\nu)}{d\gamma}+(1-\rho)\frac{dI(\gamma,-\nu)}{d\gamma}-c_2=0 \nonumber \\
      \frac{d\xi_{RD,\gamma_1}(\alpha,\sigma;c_2,c_1,\gamma,\nu)}{d\gamma_1} & = & \sqrt{\alpha}\sqrt{1-2c_1+c_2+\sigma^2}+I_{22}-I_{1}+I_{21}-\nu c_1-\gamma c_2-r=0 \nonumber \\
      \nu & = & -\frac{\sqrt{\alpha}}{\sqrt{1-2c_1+c_2+\sigma^2}} \nonumber \\
        \gamma_1 & = & \frac{1}{2\sqrt{c_2}(-\nu/2-\gamma)},
\end{eqnarray}
where $\frac{dI(\gamma,\nu)}{d\nu}$ and $\frac{dI(\gamma,-\nu)}{d\nu}$ are as given in (\ref{eq:dernucclupg3})-(\ref{eq:dernucclupg5}), $\frac{dI(\gamma,\nu)}{d\gamma}$ and $\frac{dI(\gamma,-\nu)}{d\gamma}$ are as given in (\ref{eq:dergammacclupg3})-(\ref{eq:dergammacclupg5}), and $I_{22}$, $I_{1}$, and $I_{21}$ are as given in (\ref{eq:clupg16})-(\ref{eq:clup16}).
\label{thm:cluprd1}
\end{theorem}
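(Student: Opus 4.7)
The plan is to proceed by direct stationary-point analysis: treat $\xi_{RD,\gamma_1}$ as a smooth function of the five scalar variables $(c_1,c_2,\gamma_1,\nu,\gamma)$ over the interior of the relevant domain, take each of the five partial derivatives, and set them to zero. The theorem is essentially an assembly statement, since the individual derivative computations have all been carried out in equations (\ref{eq:derc1cclupg1})--(\ref{eq:dergammacclupg5}) preceding the statement; what remains is to verify that these derivatives really do give the claimed system and that no extra conditions are hiding.

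First I would dispatch the three ``easy'' derivatives. The $c_1$-derivative picks up contributions only from the square-root term $\sqrt{\alpha}\sqrt{1-2c_1+c_2+\sigma^2}$ and from $-\nu c_1$; solving it yields the explicit formula $\nu=-\sqrt{\alpha}/\sqrt{1-2c_1+c_2+\sigma^2}$ of (\ref{eq:derc1cclupg2}). The $c_2$-derivative picks up $-1/(2\sqrt{c_2})$ from $-\sqrt{c_2}$, together with the square-root and $-\gamma c_2$ contributions; after substituting the $c_1$-condition to rewrite $\sqrt{\alpha}/\sqrt{1-2c_1+c_2+\sigma^2}=-\nu$, one arrives at $\gamma_1=1/(2\sqrt{c_2}(-\nu/2-\gamma))$, matching (\ref{eq:derc2cclupg2}). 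The $\gamma_1$-derivative is immediate, because $\xi_{RD,\gamma_1}$ is affine in $\gamma_1$, and zeroing it recovers the ``radius equation'' (\ref{eq:dergamma1cclupg1}).

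Next I would handle the two ``heavy'' derivatives, in $\nu$ and $\gamma$. Because $\xi_{RD,\gamma_1}$ depends on $\nu,\gamma$ only through the combination $\rho I(\gamma,\nu)+(1-\rho)I(\gamma,-\nu)$ (plus the term $-\nu c_1-\gamma c_2$), I only need $\partial_\nu I(\gamma,\nu)$ and $\partial_\gamma I(\gamma,\nu)$; the $-\nu$ counterparts come for free from the chain-rule identities (\ref{eq:dernucclupg5}) and (\ref{eq:dergammacclupg5}). Since $I=I_{22}-I_1+I_{21}$, I would differentiate each of the three pieces in (\ref{eq:clup16}) separately, using the elementary identities $(\erf(u))'=(2/\sqrt{\pi})e^{-u^2}$ and $(\erfc(u))'=-(2/\sqrt{\pi})e^{-u^2}$ together with the chain rule through the arguments $(2\gamma\pm\nu)/\sqrt{2}$. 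Setting the resulting derivative of $\xi_{RD,\gamma_1}$ to zero and recognizing the $-c_1$ and $-c_2$ terms that come from $-\nu c_1$ and $-\gamma c_2$ yields the first two equations of the theorem.

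The main obstacle is purely bookkeeping inside $\partial_\nu I_1$ and $\partial_\gamma I_1$: the prefactor $1/(4\sqrt{2\pi}\gamma)$ in the definition of $I_1$ combines with the polynomial factor $(\nu^2+1)$ and the Gaussian exponentials so that both the product rule and the chain rule produce many cancellations, and one must be careful (especially for $\partial_\gamma I_1$) to correctly differentiate the $1/\gamma$ prefactor, which generates an additional $-I_1/\gamma$-type term visible in (\ref{eq:dergammacclupg4}). Once these computations are carried out and checked against the expressions already displayed in (\ref{eq:dernucclupg4}) and (\ref{eq:dergammacclupg4}), the theorem follows by simply listing the five stationarity conditions, the first two of which have been solved explicitly for $\nu$ and $\gamma_1$ while the remaining three stay implicit in $(c_1,c_2,\gamma,\nu)$.
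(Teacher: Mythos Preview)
Your proposal is correct and follows essentially the same approach as the paper: the paper's own proof is a one-line appeal to ``the above analysis'' (i.e., the derivative computations (\ref{eq:derc1cclupg1})--(\ref{eq:dergammacclupg5})) together with the RDT framework from the cited references, and you have accurately identified the theorem as an assembly of those five stationarity conditions. Your breakdown into the three ``easy'' derivatives (in $c_1,c_2,\gamma_1$) and the two ``heavy'' ones (in $\nu,\gamma$), together with the chain-rule observations for the $-\nu$ arguments, mirrors exactly the structure of the paper's preceding derivation.
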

\begin{proof}
Follows as a consequence of the above analysis and a host of fundamental RDT properties developed in \cite{StojnicCSetam09,StojnicISIT2010binary,StojnicDiscPercp13,StojnicGenLasso10,StojnicGenSocp10,StojnicPrDepSocp10,StojnicRegRndDlt10} and in particular in  \cite{Stojnicclupint19,Stojnicclupcmpl19,Stojnicclupplt19}.
\end{proof}
When $c_2=1$ the above needs to be slightly adjusted. However, such a scenario will not appear as of much interest below and we skip this easy exercise and overloading the overall presentation with such details. Below we discuss how the above machinery can be utilized to design efficient optimization algorithms.

\subsection{Transforming \bl{constrained} into \bl{unconstrained} optimization}
\label{sec:consuncons}

When we created the Random Duality Theory in \cite{StojnicCSetam09,StojnicCSetamBlock09,StojnicISIT2010binary,StojnicDiscPercp13,StojnicUpper10,StojnicGenLasso10,StojnicGenSocp10,StojnicPrDepSocp10,StojnicRegRndDlt10,Stojnicbinary16fin,Stojnicbinary16asym}, it was immediately clear that it has quite a few great features. Two of them were among the most prominent.

\tcbset{colback=yellow!95!white,colframe=blue!95!white,fonttitle=\bfseries}
\begin{tcolorbox}[title=Two of the most prominent \textbf{RDT} features]
\begin{itemize}
  \item \textbf{\emph{\bl{RDT} has a massive power in providing the exact characterization of various random structures.}}
\item \textbf{\emph{The machinery of \bl{RDT} established in \cite{StojnicCSetam09,StojnicCSetamBlock09,StojnicISIT2010binary,StojnicDiscPercp13,StojnicUpper10,StojnicGenLasso10,StojnicGenSocp10,StojnicPrDepSocp10,StojnicRegRndDlt10,Stojnicbinary16fin,Stojnicbinary16asym} also provided large scale algorithms of an almost unparallel computational efficiency.}}
\end{itemize}.\vspace{-.3in}
\end{tcolorbox}
The first feature basically implied that an exact performance analysis of optimization problems became doable with the analysis precision often reaching fourth or fifth decimal even for problems of relatively small dimensions of few hundreds (even though the RDT analysis was conceptually designed for basically infinitely dimensional problems). On the other hand, the second feature was the theory's ability to provide ways to create practical, computationally efficient algorithms than can actually achieve the above mentioned analytically exactly predictable performance. Below we briefly revisit the fundamental ideas from \cite{StojnicCSetam09,StojnicCSetamBlock09,StojnicISIT2010binary,StojnicDiscPercp13,StojnicUpper10,StojnicGenLasso10,StojnicGenSocp10,StojnicPrDepSocp10,StojnicRegRndDlt10,Stojnicbinary16fin,Stojnicbinary16asym} that we used over the years for creating very fast large scale algorithms and discuss how they work within the CLuP context. First we once again recall the ending optimization of the CLuP procedure
\begin{eqnarray}
\min_{\x} & & -\|\x\|_2  \nonumber \\
\mbox{subject to} & & \|\y-A\x\|_2\leq r\nonumber \\
&& \x\in \left [-\frac{1}{\sqrt{n}},\frac{1}{\sqrt{n}}\right ]^n. \label{eq:ctucclup2}
\end{eqnarray}
Solving this problem on a large scale even in the hands of the very best optimization theorist might be quite challenging. There are of course many reasons for that. For example, even if one can find a procedure that would lead towards the solution, it remains almost inconceivable how handling the set of constraints can be so to say computationally avoided which often is pretty much a necessary step needed to create algorithms powerful enough to be capable of handling very large dimensions. Instead of relying on the standard optimization techniques we of course again focus on the \bl{\textbf{Random Duality Theory}}. We first write the Lagrangian of the above primal
\begin{eqnarray}
\min_{\x}\max_{\gamma_1\geq 0} & & -\|\x\|_2+\gamma_1(\|\y-A\x\|_2- r)  \nonumber \\
\mbox{subject to} & & \x\in \left [-\frac{1}{\sqrt{n}},\frac{1}{\sqrt{n}}\right ]^n. \label{eq:ctucclup4}
\end{eqnarray}
Now, as is trivially known even for the optimization beginners, the above problem does remove the critical set of constraints and seemingly transforms the constrained primal optimization into practically speaking an unconstrained one (the individual constraints are typically much easier to handle and we don't view them as particularly problematic in the discussion here). Of course, that comes at the expense of having the original single optimization being transformed into a double one. As expected, the above Lagrangian then doesn't really help much with the idea of avoiding the existence of critical constraints. This is all well known and generically true for pretty much any optimization problem. However, such a reasoning remains valid only so long until one is capable of finding a quick way of computing $\gamma_1$. That is exactly where the Random Duality Theory comes in place. Utilizing the RDT concepts that we developed in \cite{StojnicCSetam09,StojnicCSetamBlock09,StojnicISIT2010binary,StojnicDiscPercp13,StojnicUpper10,StojnicGenLasso10,StojnicGenSocp10,StojnicPrDepSocp10,StojnicRegRndDlt10,Stojnicbinary16fin,Stojnicbinary16asym} one can determine $\gamma_1$ for pretty much any problem that falls within the frame of RDT. When it comes in particular to the problems of our interest here, then the above machinery developed based on RDT and summarized in the above theorem provides the way to determine $\gamma_1$. That basically means that through RDT we have automatically solved one of the biggest obstacles in creating large scale capable algorithms. Since the problem is conceptually solved the only things that are left to be done are technical realizations and we discuss them in a separate section below. There are many ways how one can go about the technical realizations. Since this is the introductory paper, we will focus on some of the most basic implementations and in some of our companion papers we will discuss more advanced optimization tools that can also be utilized.

\section{Large scale CLuP}
\label{sec:cluplargesc}

As mentioned above, once the results of Theorem \ref{thm:cluprd1} are available, there are many practical ways how one can handle (\ref{eq:ctucclup4}). We will consider a mixed stationary points-constraints satisfaction type of implementations. To that end we start with the following quick observation
\begin{eqnarray}
\min_{\x} & & \xi_{LS}  \nonumber \\
\mbox{subject to} & & \xi_{LS}=-\|\x\|_2+\hat{\gamma}_1(\|\y-A\x\|_2- r)  \nonumber \\
& & \x\in \left [-\frac{1}{\sqrt{n}},\frac{1}{\sqrt{n}}\right ]^n, \label{eq:LSclup1}
\end{eqnarray}
where $\hat{\gamma}_1$ is obtained through the machinery of Theorem \ref{thm:cluprd1} (and scaled by $\sqrt{n}$). We then also have
\begin{equation}
\frac{d\xi_{LS}}{d\x} =-\frac{\x}{\|\x\|_2}+\hat{\gamma}_1\frac{-A^T(\y-A\x)}{\|\y-A\x\|_2}. \label{eq:LSclup2}
\end{equation}
After setting the derivative to zero we further have from (\ref{eq:LSclup2})
\begin{equation}
\frac{d\xi_{LS}}{d\x}=0 \Longleftrightarrow -\x\|\y-A\x\|_2-\hat{\gamma}_1A^T\y\|\x\|_2+\hat{\gamma}_1A^TA\x\|\x\|_2=0. \label{eq:LSclup3}
\end{equation}
There are many ways how one can solve the above equation. We leave more complex choices for some of our companion papers, here we choose a simple scaled regularized contraction. Basically, we set
\begin{eqnarray}
& &  -\x\|\y-A\x\|_2-\hat{\gamma}_1A^T\y\|\x\|_2+\hat{\gamma}_1A^TA\x\|\x\|_2=0 \nonumber \\
&\Longleftrightarrow & c_{q,2}\x-\x\|\y-A\x\|_2-c_{q,2}\x-\hat{\gamma}_1A^T\y\|\x\|_2+\hat{\gamma}_1A^TA\x\|\x\|_2=0\nonumber \\
&\Longleftrightarrow & \x(c_{q,2}-\|\y-A\x\|_2)=c_{q,2}\x+\hat{\gamma}_1A^T\y\|\x\|_2-\hat{\gamma}_1A^TA\x\|\x\|_2 \nonumber \\
&\Longleftrightarrow & \x=\frac{c_{q,2}\x+\hat{\gamma}_1A^T\y\|\x\|_2-\hat{\gamma}_1A^TA\x\|\x\|_2}{c_{q,2}-\|\y-A\x\|_2}, \label{eq:LSclup4}
\end{eqnarray}
where $c_{q,2}$ is an appropriately chosen constant. Taking into account the individual box constraints one can then established the following contraction
\begin{equation}
\x^{(i+1,r)}=\frac{c_{q,2}\x^{(i)}+\hat{\gamma}_1A^T\y\|\x^{(i)}\|_2-\hat{\gamma}_1A^TA\x^{(i)}\|\x^{(i)}\|_2}{c_{q,2}-\|\y-A\x^{(i)}\|_2},
\x^{(i+1)}=\begin{cases}
  -\frac{1}{\sqrt{n}}, & \mbox{if } \x^{(i+1,r)}\leq -\frac{1}{\sqrt{n}} \\
  \x^{(i+1,r)}, & \mbox{if } -\frac{1}{\sqrt{n}}\leq \x^{(i+1,r)}\leq \frac{1}{\sqrt{n}} \\
  \frac{1}{\sqrt{n}}, & \mbox{otherwise}.
\end{cases}\label{eq:LSclup5}
\end{equation}
However, depending on the implementation one can go a step further and utilize the RDT to an even larger degree. Namely, after observing that in the limit $\|\x^{(i)}\|_2\rightarrow \sqrt{\hat{c}_2}$ and $\|\y-A\x^{(i)}\|_2\rightarrow r_{sc}r_{plt}\sqrt{n}$, where, similarly to $\gamma_1$, $\hat{c}_2$ is also obtained through the machinery of Theorem \ref{thm:cluprd1}, one can also have the following contraction
\begin{equation}
\x^{(i+1,r)}=\frac{c_{q,2}\x^{(i)}+\hat{\gamma}_1A^T\y\sqrt{\hat{c}_2}-\hat{\gamma}_1A^TA\x^{(i)}\sqrt{\hat{c}_2}}{c_{q,2}-r_{sc}r_{plt}\sqrt{n}},
\x^{(i+1)}=\begin{cases}
  -\frac{1}{\sqrt{n}}, & \mbox{if } \x^{(i+1,r)}\leq -\frac{1}{\sqrt{n}} \\
  \x^{(i+1,r)}, & \mbox{if } -\frac{1}{\sqrt{n}}\leq \x^{(i+1,r)}\leq \frac{1}{\sqrt{n}} \\
  \frac{1}{\sqrt{n}}, & \mbox{otherwise}.
\end{cases}\label{eq:LSclup6}
\end{equation}
One can then formalize the above procedure and run it. There are a couple of things that one should immediately note. First, the complexity per iteration is remarkably low. Namely, there is only one matrix-vector multiplication which amounts to overall per iteration complexity of $mn$ (or in scaling terms, it amounts to a quadratic complexity per iteration). Also, one can run it successively. Namely, after getting certain output $\x^{(CLuP)}$ after say $i_{max}$ iterations, one can rerun it with the $\x^{(CLuP)}$ as the $\x^{(0)}$. One can also repeat rerunning as many times as desired. We found this as particularly useful in decreasing the number of running iterations. In the following section we discuss these and a few other algorithm's features in a bit more details.

\begin{algorithm}[t]
\caption{Large scale CLuP -- achieving \textbf{exact} ML in polynomial time)}

{\bf Input:} Received vector $\y \in \mR^m$, system matrix $A\in \mR^{m\times n}$, radius $r$, $\hat{\gamma}_1$, $\hat{c}_2$, $c_{q,2}$, starting unknown vector $\x^{(0)}\in\mR^n$, maximum number of iterations $i_{max}$, desired converging precision $\delta_{min}$.[\bl{$\mbox{CLuP}(\y,A,r,\hat{\gamma}_1, \hat{c}_2, c_{q,2},\x^{(0)},i_{max},\delta)$}] \\
{\bf Output:} Estimated vector $\x^{(i)} \in \mR^n$ and its discretized  variant $\x^{(CLuP)}$.[\bl{$\x^{(i)},\x^{(CLuP)}$}]

\begin{algorithmic}[1]

\STATE Initialize the convergence gap and the iteration counter, $\delta\leftarrow 10^{10}$ and $i\leftarrow 0$

\STATE Set $c_{2,\|\|}^{(0)}\leftarrow \delta^2$

\WHILE{$i+1\leq i_{max}$ and/or $\delta\geq\delta_{min}$}

\STATE Obtain $\x^{(i+1,r)}$
\begin{equation*}\label{eq:alg1}
\x^{(i+1,r)}=\frac{c_{q,2}\x^{(i)}+\hat{\gamma}_1A^T\y\sqrt{\hat{c}_2}-\hat{\gamma}_1A^TA\x^{(i)}\sqrt{\hat{c}_2}}{c_{q,2}-r_{sc}r_{plt}\sqrt{n}}.
\end{equation*}

\STATE Set
\begin{equation*}\label{eq:alg2}
\x^{(i+1)}=\begin{cases}
  -\frac{1}{\sqrt{n}}, & \mbox{if } \x^{(i+1,r)}\leq -\frac{1}{\sqrt{n}} \\
  \x^{(i+1,r)}, & \mbox{if } -\frac{1}{\sqrt{n}}\leq \x^{(i+1,r)}\leq \frac{1}{\sqrt{n}} \\
  \frac{1}{\sqrt{n}}, & \mbox{otherwise}.
\end{cases}
\end{equation*}

\STATE Set $c_{2,\|\|}^{(i+1)}\leftarrow \mbox{sign}(\x^{(i)})^T\mbox{sign}(\x^{(i+1)})$

\STATE Set $\delta\leftarrow |1-c_{2,\|\|}^{(i+1)}|$

\STATE Update the iteration counter $i\leftarrow i+1$

\ENDWHILE

\STATE $\x^{(CLuP)}\leftarrow \frac{1}{\sqrt{n}}\mbox{sign}(\x^{(i)})$.

\end{algorithmic}

\end{algorithm}

\section{Numerical results}
\label{sec:numres}

In this section we present a large set of numerical results that relate to both, the theoretical predictions and the practical algorithm running. To start things off, we in Figure \ref{fig:simLSclupprerr} show the results that can be obtained through the rerunning of the above large scale CLuP (we will sometimes refer to it as $\text{CLuP}^{r_0}$). As the theory suggests, the $\text{CLuP}^{r_0}$ achieves performance almost identical to the ML and substantially better (around 4[db]) than the corresponding polytope convex relaxation. In Figure \ref{fig:simLSclupprerr}, we in addition to the CLuP and polytope relaxation curves show the curve that corresponds to the performance of an ideal detection on a single Gaussian channel with binary signals and appropriately scaled SNR (we refer to this type of performance and the corresponding curve as the Ideal ML). The exact formula is trivially given after the SNR rescaling as
\begin{equation}\label{eq:numres1}
  p_{err}^{(iml)}=\frac{1}{2}\erfc(\sqrt{\alpha/2}/\sqrt{10^{-(1/\sigma^2)[\text{db}]/10}}).
\end{equation}

\begin{figure}[htb]
\centering
\centerline{\epsfig{figure=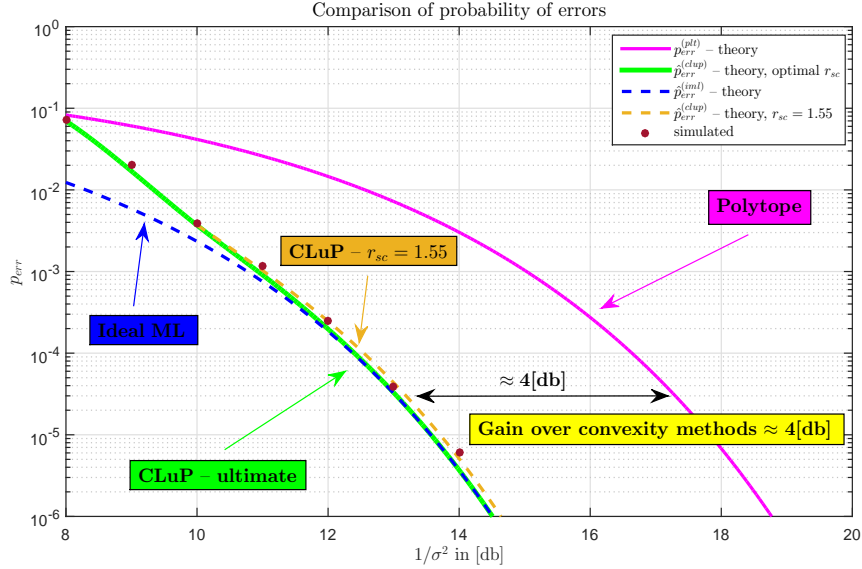,width=13.5cm,height=8cm}}
\caption{Comparison of $p_{err}$ as a function of $1/\sigma^2$; $\alpha=0.8$}
\label{fig:simLSclupprerr}
\end{figure}
We do recall from \cite{Stojnicclupint19}, that the ultimate CLuP's performance is determined for the worst case probability of error. The numerical values for the corresponding $r_{sc}$ and the resulting probabilities of error are given in Table \ref{tab:tabclup1}. The probabilities of error from Table \ref{tab:tabclup1} correspond to the green curve in Figure \ref{fig:simLSclupprerr}.
\begin{table}[h]
\caption{Numerical values for $r_{sc}$ and $\hat{p}_{err}^{(clup)}$ that correspond to the data in Figure \ref{fig:simLSclupprerr} (green curve)} \vspace{.1in}
\hspace{-0in}\centering
\footnotesize{
\begin{tabular}{||c||c|c|c|c|c|c|c||}\hline\hline
$ 1/\sigma^2 $[db] & $8  $ & $9  $ & $10  $ & $11  $ & $12  $ & $13  $ & $14  $  \\ \hline\hline
$r_{sc}$ & $1.10  $ & $1.41  $ & $1.56  $ & $1.611  $ & $1.627  $ & $1.632  $ & $1.633  $  \\ \hline
$\hat{p}_{err}^{(clup)}$ & $6.98e-02  $ & $1.70e-02  $ & $3.69e-03  $ & $9.09e-04  $ & $1.97e-04  $ & $3.29e-05  $ & $3.70e-06  $  \\ \hline\hline
\end{tabular}}
\label{tab:tabclup1}
\end{table}
We conducted numerical simulations with $r_{sc}$ values that are a bit below the optimal ones as it is a bit easier to achieve better and faster concentrations of all critical system parameters. For concreteness, we in Table \ref{tab:tabclup2} show the exact values of $r_{sc}$ that were simulated and the theoretical predictions for $\hat{p}_{err}^{(clup)}$ that one can obtain for such values. We also plot (as a light brown curve) in Figure \ref{fig:simLSclupprerr} the theoretical predictions that can be obtained for a fixed value, $r_{sc}=1.55$. Such a value is either equal (for $1/\sigma^2=12-14$[db]) or very close (for $1/\sigma^2=10-11$[db]) to the simulated values. We would like to emphasize that one of the very best features of the original CLuP from \cite{Stojnicclupint19} is that one does not necessarily need to know the SNR to be able to run it. Moreover, one can run it in a so to say universal way, i.e. one can choose a fixed $r_{sc}$ and obtain performance almost identical to the ML prediction. We showcased such CLuP abilities on multiple occasions in \cite{Stojnicclupint19,Stojnicclupcmpl19,Stojnicclupplt19}. Here, while we are in a way more interested in CLuP's ultimate behavior the curve for $r_{sc}=1.55$ in Figure \ref{fig:simLSclupprerr} and the results from Table \ref{tab:tabclup2} demonstrate that in the most interesting SNR regimes even the ultimate CLuP's behavior can be very closely approached with a universal type of running.
\begin{table}[h]
\caption{Numerical values for $r_{sc}$ and $\hat{p}_{err}^{(clup)}$ that correspond to the simulated data in Figure \ref{fig:simLSclupprerr}} \vspace{.1in}
\hspace{-0in}\centering
\footnotesize{
\begin{tabular}{||c||c|c|c|c|c|c|c||}\hline\hline
$ 1/\sigma^2 $[db] & $8  $ & $9  $ & $10  $ & $11  $ & $12  $ & $13  $ & $14  $ \\ \hline\hline
$r_{sc}$ & $1.10  $ & $1.35  $ & $1.56  $ & $1.5  $ & $1.55  $ & $1.55  $ & $1.55  $ \\ \hline
$\hat{p}_{err}^{(clup)}$ & $6.98e-02  $ & $1.79e-02  $ & $3.69e-03  $ & $1.19e-03  $ & $2.46e-04  $ & $4.35e-05  $ & $5.11e-06  $  \\ \hline\hline
\end{tabular}}
\label{tab:tabclup2}
\end{table}
Finally, in Table \ref{tab:tabclup3} we show in parallel both the theoretical and the simulated values for all critical parameters ($c_2$, $c_1$, $\hat{p}_{err}^{(clup)}$, and $r$). As explained earlier, we reran $\text{CLuP}^{r_0}$ with $n=2000$ and $i_{max}=300$ (we found $c_{q,2}=5\sqrt{n}$ to be a solid starting point; for $1/\sigma^2 \in\{9,10\}$[db] the procedure was a bit different and we will discuss it separately below). One can observe a very strong agreement between what the theory predicts and what one can get through the numerical simulations.
\begin{table}[h]
\caption{\textbf{Theoretical}/\bl{\textbf{simulated}} values for $c_2$, $c_1$, $\hat{p}_{err}^{(clup)}$, and $r$ in Figure \ref{fig:simLSclupprerr} ($n=2000$)} \vspace{.1in}
\hspace{-0in}\centering
\footnotesize{
\begin{tabular}{||c||c||c|c||c|c||c|c||c|c||}\hline\hline
$ 1/\sigma^2 $[db] & $\hat{\gamma}_1\sqrt{n}  $ & $c_2$ & $c_2$ & $c_1$ & $c_1$ & $\hat{p}_{err}^{(clup)}$ & $\hat{p}_{err}^{(clup)}$ & $\frac{r}{\sqrt{n}}$ & $\frac{r}{\sqrt{n}}$ \\ \hline\hline
$8 $ & $\mathbf{1.8022 }$ & $\mathbf{0.8325 }$ & $\bl{\mathbf{0.8335 }}$ & $\mathbf{0.8120 }$ & $\bl{\mathbf{0.8062 }}$ & $\mathbf{6.9878e-02 }$ & $\bl{\mathbf{7.3373e-02 }}$ & $\mathbf{0.2401 }$ & $\bl{\mathbf{0.2401 }}$ \\ \hline
$9 $ & $\mathbf{0.7916 }$ & $\mathbf{0.9432 }$ & $\bl{\mathbf{0.9435 }}$ & $\mathbf{0.9437 }$ & $\bl{\mathbf{0.9398 }}$ & $\mathbf{1.7933e-02 }$ & $\bl{\mathbf{2.0148e-02 }}$ & $\mathbf{0.2624 }$ & $\bl{\mathbf{0.2620 }}$ \\ \hline
$10 $ & $\mathbf{0.4657 }$ & $\mathbf{0.9910 }$ & $\bl{\mathbf{0.9912 }}$ & $\mathbf{0.9898 }$ & $\bl{\mathbf{0.9896 }}$ & $\mathbf{3.6882e-03 }$ & $\bl{\mathbf{3.8263e-03 }}$ & $\mathbf{0.2702 }$ & $\bl{\mathbf{0.2701 }}$ \\ \hline
$11 $ & $\mathbf{0.5816 }$ & $\mathbf{0.9815 }$ & $\bl{\mathbf{0.9816 }}$ & $\mathbf{0.9871 }$ & $\bl{\mathbf{0.9872 }}$ & $\mathbf{1.1872e-03 }$ & $\bl{\mathbf{1.1693e-03 }}$ & $\mathbf{0.2316 }$ & $\bl{\mathbf{0.2315 }}$ \\ \hline
$12 $ & $\mathbf{0.5119 }$ & $\mathbf{0.9905 }$ & $\bl{\mathbf{0.9905 }}$ & $\mathbf{0.9938 }$ & $\bl{\mathbf{0.9939 }}$ & $\mathbf{2.4610e-04 }$ & $\bl{\mathbf{2.4485e-04 }}$ & $\mathbf{0.2132 }$ & $\bl{\mathbf{0.2129 }}$ \\ \hline
$13 $ & $\mathbf{0.5243 }$ & $\mathbf{0.9913 }$ & $\bl{\mathbf{0.9913 }}$ & $\mathbf{0.9947 }$ & $\bl{\mathbf{0.9948 }}$ & $\mathbf{4.3485e-05 }$ & $\bl{\mathbf{3.8916e-05 }}$ & $\mathbf{0.1901 }$ & $\bl{\mathbf{0.1900 }}$ \\ \hline
$14 $ & $\mathbf{0.5348 }$ & $\mathbf{0.9920 }$ & $\bl{\mathbf{0.9920 }}$ & $\mathbf{0.9954 }$ & $\bl{\mathbf{0.9954 }}$ & $\mathbf{5.1122e-06 }$ & $\bl{\mathbf{6.1052e-06 }}$ & $\mathbf{0.1694 }$ & $\bl{\mathbf{0.1693 }}$ \\ \hline\hline
\end{tabular}}
\label{tab:tabclup3}
\end{table}

\subsection{Rephasing}
\label{sec:rephasing}

As mentioned above, for $1/\sigma^2 =9$ or $10$[db] we ran a slightly different procedure. Namely, although one can achieve the theoretical predictions with a bit larger $n$ for the values given in the above tables, we found a bit helpful for lower dimensions to do the so-called CLuP \textbf{rephasing}. That means that we first reran $\text{CLuP}^{r_0}$ with $n=2000$ and $i_{max}=300$ for ceratin values of parameters and then repeated the same type of rerunning for a different set of parameters (basically in the second rerunning the parameters were exactly as specified in Table \ref{tab:tabclup3}). The parameters that we selected for both rerunning (we refer to them as phase 0 and phase 1) are shown in Table \ref{tab:tabclup4} for $1/\sigma^2=9$[db] and in Table \ref{tab:tabclup5} for $1/\sigma^2=10$[db].
\begin{table}[h]
\caption{Rephasing -- \textbf{theoretical}/\bl{\textbf{simulated}} values for $c_2$, $c_1$, $\hat{p}_{err}^{(clup)}$, and $r$ ($n=2000$, $ 1/\sigma^2=9$[db])} \vspace{.1in}
\hspace{-0in}\centering
\footnotesize{
\begin{tabular}{||c||c||c|c||c|c||c|c||c|c||}\hline\hline
$ 1/\sigma^2 $[db] & $\hat{\gamma}_1\sqrt{n}  $ & $c_2$ & $c_2$ & $c_1$ & $c_1$ & $\hat{p}_{err}^{(clup)}$ & $\hat{p}_{err}^{(clup)}$ & $\frac{r}{\sqrt{n}}$ & $\frac{r}{\sqrt{n}}$\\ \hline\hline
$9 $ (phase 0) & $\mathbf{1.9024 }$ & $\mathbf{0.8343 }$ & $\bl{\mathbf{0.8353 }}$ & $\mathbf{0.8586 }$ & $\bl{\mathbf{0.8588 }}$ & $\mathbf{3.4801e-02 }$ & $\bl{\mathbf{3.5162e-02 }}$ & $\mathbf{0.2138 }$ & $\bl{\mathbf{0.2135 }}$ \\ \hline
$9 $ (phase 1) & $\mathbf{0.7916 }$ & $\mathbf{0.9432 }$ & $\bl{\mathbf{0.9435 }}$ & $\mathbf{0.9437 }$ & $\bl{\mathbf{0.9398 }}$ & $\mathbf{1.7933e-02 }$ & $\bl{\mathbf{2.0148e-02 }}$ & $\mathbf{0.2624 }$ & $\bl{\mathbf{0.2620 }}$ \\ \hline\hline
\end{tabular}}
\label{tab:tabclup4}
\end{table}
\begin{table}[h]
\caption{Rephasing -- \textbf{theoretical}/\bl{\textbf{simulated}} values for $c_2$, $c_1$, $\hat{p}_{err}^{(clup)}$, and $r$ ($n=2000$, $ 1/\sigma^2=10$[db])} \vspace{.1in}
\hspace{-0in}\centering
\footnotesize{
\begin{tabular}{||c||c||c|c||c|c||c|c||c|c||}\hline\hline
$ 1/\sigma^2 $[db] & $\hat{\gamma}_1\sqrt{n}  $ & $c_2$ & $c_2$ & $c_1$ & $c_1$ & $\hat{p}_{err}^{(clup)}$ & $\hat{p}_{err}^{(clup)}$ & $\frac{r}{\sqrt{n}}$ & $\frac{r}{\sqrt{n}}$\\ \hline\hline
$10 $ (phase 0) & $\mathbf{1.2949 }$ & $\mathbf{0.8937 }$ & $\bl{\mathbf{0.8943 }}$ & $\mathbf{0.9218 }$ & $\bl{\mathbf{0.9223 }}$ & $\mathbf{1.0500e-02 }$ & $\bl{\mathbf{1.0354e-02 }}$ & $\mathbf{0.2079 }$ & $\bl{\mathbf{0.2079 }}$ \\ \hline
$10 $ (phase 1) & $\mathbf{0.4657 }$ & $\mathbf{0.9910 }$ & $\bl{\mathbf{0.9912 }}$ & $\mathbf{0.9898 }$ & $\bl{\mathbf{0.9896 }}$ & $\mathbf{3.6882e-03 }$ & $\bl{\mathbf{3.8263e-03 }}$ & $\mathbf{0.2702 }$ & $\bl{\mathbf{0.2701 }}$  \\ \hline\hline
\end{tabular}}
\label{tab:tabclup5}
\end{table}

\subsection{Increasing the problem dimensions}
\label{sec:incdim}

While it is clear from the above discussion about the $\text{CLuP}^{r_0}$ that it is well-suited for large scale applications, we below show what effect the increasing of problem dimensions has on the overall performance. In Table \ref{tab:tabclup6} and Figure \ref{fig:LSclupprerrfunn1} we show the change in probability of error as $n$ increases for $ 1/\sigma^2=9$[db]. On the other hand, the same type of change for $1/\sigma^2=10$[db] we show in Table \ref{tab:tabclup7} and Figure \ref{fig:LSclupprerrfunn2}.
\begin{table}[h]
\caption{Increasing the problem dimension -- \textbf{theoretical}/\bl{\textbf{simulated}} values for $\hat{p}_{err}^{(clup)}$ ($ 1/\sigma^2=9$[db])} \vspace{.1in}
\hspace{-0in}\centering
\small{
\begin{tabular}{||c||c|c|c|c|c||c||}\hline\hline
$n$ & $300$ & $500$ & $1000$ & $2000$ & $4000$ & $\infty$ (\textbf{limit -- theory})\\ \hline\hline
$\hat{p}_{err}^{(clup)}$ (phase 0) & $\bl{\mathbf{0.0429}}$ &  $\bl{\mathbf{0.0382}}$  & $\bl{\mathbf{0.0350}}$  & $\bl{\mathbf{0.0351}}$  & $\bl{\mathbf{0.0346}}$ & $\mathbf{0.0348}$ \\ \hline\hline
$\hat{p}_{err}^{(clup)}$ (phase 1) & $\bl{\mathbf{0.0355}}$ &  $\bl{\mathbf{0.0292}}$  & $\bl{\mathbf{0.0226}}$  & $\bl{\mathbf{0.0201}}$  & $\bl{\mathbf{0.0185}}$ & $\mathbf{0.0179}$ \\ \hline\hline
\end{tabular}}
\label{tab:tabclup6}
\end{table}
\begin{figure}[htb]
\centering
\centerline{\epsfig{figure=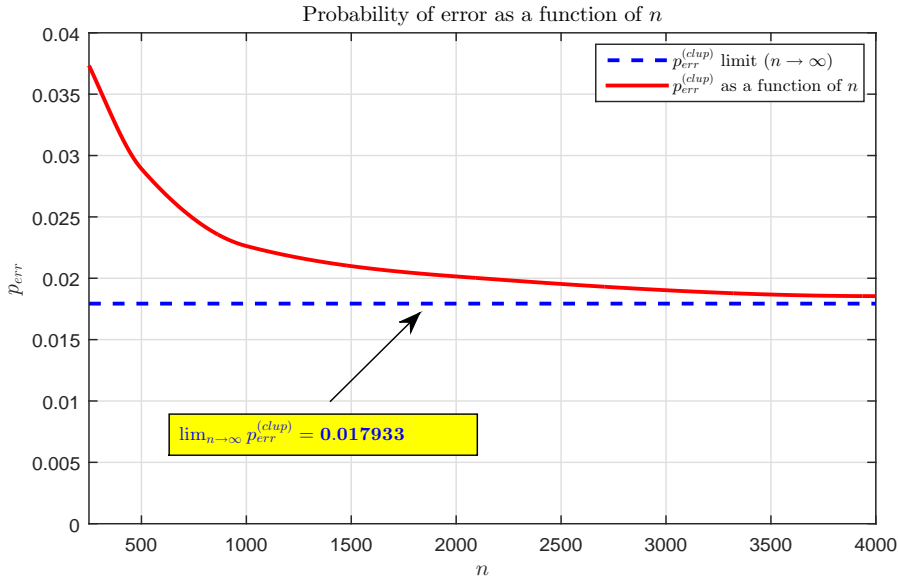,width=13.5cm,height=8cm}}
\caption{$p_{err}$ as a function of $n$; $\alpha=0.8$, $1/\sigma^2=9$[db]}
\label{fig:LSclupprerrfunn1}
\end{figure}
As can be seen from both sets of tables and figures, already for $n=500$ one achieves almost optimal performance. This means that although $\text{CLuP}^{r_0}$ is particularly tailored to handle large dimensions, it can fairly accurate work even for moderately small $n$. Moreover, as SNR increases one generically expects that smaller dimensions are sufficient to achieve the same level of error. Both tables and figures confirm that this is indeed true.
\begin{table}[h]
\caption{Increasing the problem dimension -- \textbf{theoretical}/\bl{\textbf{simulated}} values for $\hat{p}_{err}^{(clup)}$ ($ 1/\sigma^2=10$[db])} \vspace{.1in}
\hspace{-0in}\centering
\small{
\begin{tabular}{||c||c|c|c|c||c||}\hline\hline
$n$ & $400$ & $500$ & $1000$ & $2000$ & $\infty$ (\textbf{limit -- theory})\\ \hline\hline
$\hat{p}_{err}^{(clup)}$ (phase 0) & $\bl{\mathbf{0.0122}}$ &  $\bl{\mathbf{0.0116}}$  & $\bl{\mathbf{0.0106}}$  & $\bl{\mathbf{0.0104}}$  &  $\mathbf{0.0105}$ \\ \hline\hline
$\hat{p}_{err}^{(clup)}$ (phase 1) & $\bl{\mathbf{6.50e-3}}$ &  $\bl{\mathbf{5.47e-3}}$  & $\bl{\mathbf{4.44e-3}}$  & $\bl{\mathbf{3.83e-3}}$  & $\mathbf{3.69e-3}$ \\ \hline\hline
\end{tabular}}
\label{tab:tabclup7}
\end{table}
\begin{figure}[htb]
\centering
\centerline{\epsfig{figure=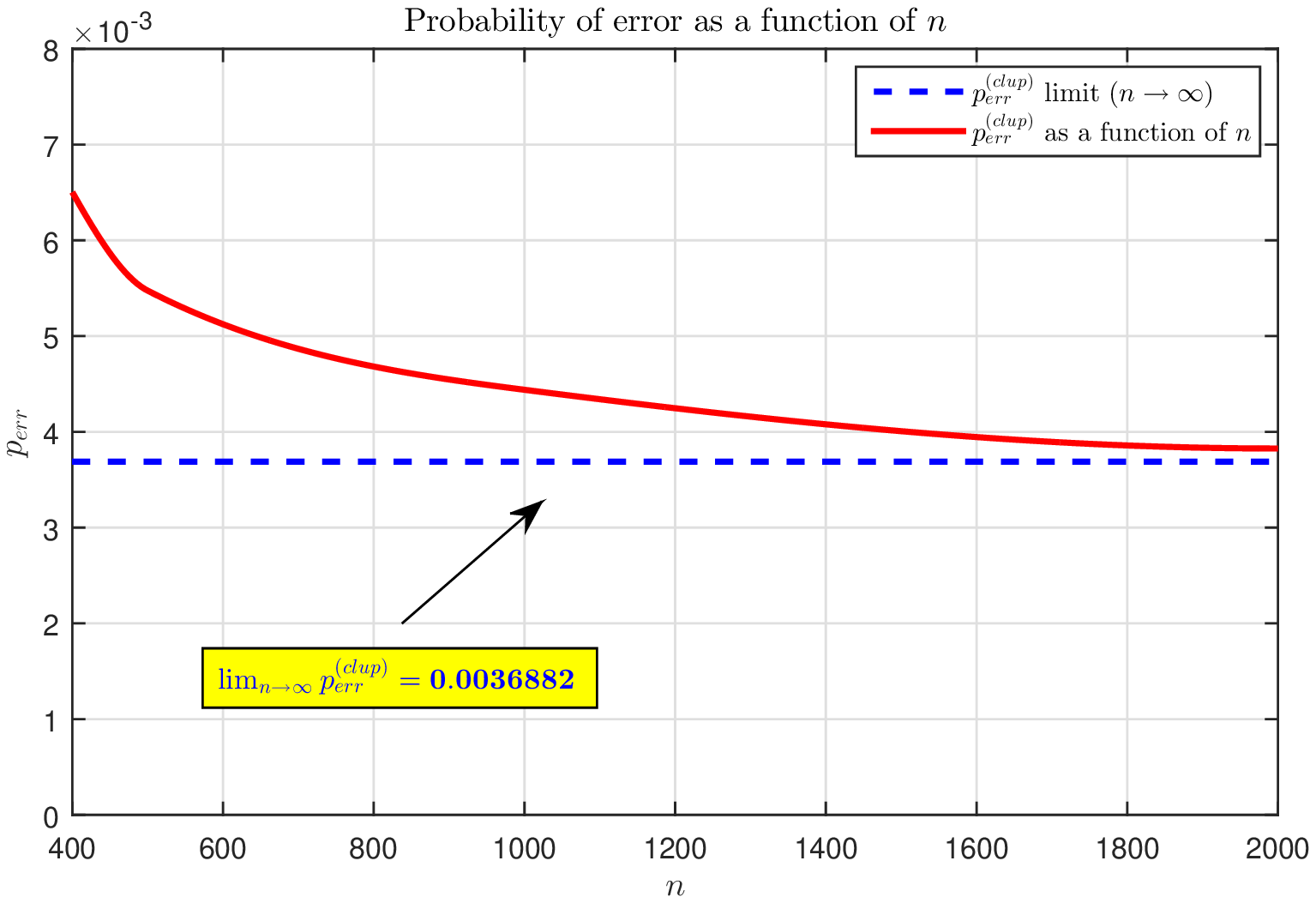,width=13.5cm,height=8cm}}
\caption{$p_{err}$ as a function of $n$; $\alpha=0.8$, $1/\sigma^2=10$[db]}
\label{fig:LSclupprerrfunn2}
\end{figure}

\section{Conclusion}
\label{sec:conc}

In this paper we revisited the CLuP mechanism that we recently introduced in \cite{Stojnicclupint19,Stojnicclupcmpl19,Stojnicclupplt19} as a fast \bl{\textbf{Random Duality Theory}} based polynomial procedure for handling hard optimization problems. As was already demonstrated in \cite{Stojnicclupint19,Stojnicclupcmpl19,Stojnicclupplt19}, although the theoretical RDT based predictions are for large dimensional problems, the CLuP performs very well even for moderate problem sizes of a few hundreds. Given our earlier success in handling large scale problems through RDT one would then expect that CLuP exhibits a similar type of behavior as well. The results that we presented here indeed confirm such expectations.

We first provided a theoretical foundation for the algorithmic use of RDT. Along the lines of our earlier works, we first reemphasized one of great RDT features, namely, its ability to transform constrained optimization problems into unconstrained ones. We recalled on some of the key technical components behind the thinking process that enabled such a transformation and then showed how these technicalities can be redone for the concrete MIMO ML detection problem. After redoing all the technicalities we established the resulting CLuP adaptation as $\text{CLuP}^{r_0}$ or the large scale CLuP.

Once the key ideas for creating the large scale CLuP are established there are many practical ways how one can go about the concrete implementation. We chose the above mentioned $\text{CLuP}^{r_0}$, as a scaled regularized contraction optimization mechanism based on a mixture of stationary points plus constraints satisfaction principles. Since this is the introductory paper regarding the large scale CLuP considerations we chose a relatively simple mechanism to showcase the main ideas in the simplest possible way. We also designed many other more sophisticated implementations and will discuss the most successful of them in some of our companion papers.

It is worth noting though, that even the simplest introductory variant that we presented here has some of generally very desirable large scale features. First, it achieves basically theoretically minimal complexity per iteration of $mn$ basic multiplication/addition operations (essentially only a single matrix/vector multiplication); second, a fairly small number of iterations (of order of a few hundreds) is often sufficient to achieve a decent level of performance.

Another particularly important and practically useful feature of the introduced $\text{CLuP}^{r_0}$ mechanism is the appearance of the so-called \emph{\textbf{rephasing}} phenomenon. It turns out that due to the algorithm's structure one can rerun the $\text{CLuP}^{r_0}$ with different set of running parameters and obtain the desired performance. In the problem instances that we studied here that wasn't generally necessary. However, there are scenarios where such a rerunning might be of crucial importance to ensure that CLuP maintains the ML achieving ability. In separate papers we will consider some of such scenarios and discuss in great details all the important rephasing aspects.

We also provided a solid set of numerical results which confirmed all theoretical predictions. Moreover, we demonstrated that even the large scale CLuP maintains ability to handle problems of smaller dimensions (of order of several hundreds). Of course, we showcased that for dimensions of several thousands it works almost exactly as the theory predicts (the difference between the theoretical and simulated values of various system parameters is often on the fourth or fifth decimal). We would specifically like to emphasize that the large scale CLuP is expected to perform even better as the dimensions grow and problems of size of several tens/hundreds of thousands/millions can be handled even with a better precision. This is of course especially important in the \emph{\textbf{big data}} era where such problem instances are of particular interest.

Now that all the key concepts are available options for further explorations are almost endless. As mentioned above, various other implementations are possible. Many of them we have already created and will discuss in separate papers. Moreover, the application of the main concepts is in no way restricted to the MIMO ML detection. We chose MIMO ML as the starting point of the large scale CLuP story to in a way follow the way how we introduced the original CLuP. However, the whole theory is applicable to a large set of problems from various other scientific fields. Quite a few of such applications we have already explored and will present the main conclusions of such explorations in separate papers. Here we would just like to emphasize that similarly to the basic CLuP, the large scale CLuP when used in different types of optimization problems requires a bit of technical modification but the core of the ideas is what we presented in this paper and in a long line of work \cite{StojnicCSetam09,StojnicCSetamBlock09,StojnicISIT2010binary,StojnicDiscPercp13,StojnicUpper10,StojnicGenLasso10,StojnicGenSocp10,StojnicPrDepSocp10,StojnicRegRndDlt10,Stojnicbinary16fin,Stojnicbinary16asym} and more recently \cite{Stojnicclupint19,Stojnicclupcmpl19,Stojnicclupplt19}.

\begin{singlespace}
\bibliographystyle{plain}
\bibliography{cluplargescRefs}
\end{singlespace}

\end{document}